\theoremstyle{plain}
\numberwithin{equation}{section}
\newtheorem{thm}{Theorem}[section]
\newtheorem{lem}[thm]{Lemma}
\newcounter{cond}
\newcommand{\real}{{\mathbb R}}
\newcommand{\trace}{\mathrm{tr\,}}
\newcommand{\rmin}{\mathrm{In\,}}
\newcommand{\rmob}{\mathrm{Ob\,}}
\newcommand{\rmsub}{\mathrm{Sub}}
\newcommand{\ityes}{\textit{yes}}      
\newcommand{\itno}{\textit{no}}
\newcommand{\tbullet}{\mathrel{\raise .2ex\hbox{\tiny$\bullet$}}} 
\newcommand{\dscript}{\mathcal{D}}
\newcommand{\escript}{\mathcal{E}}
\newcommand{\hscript}{\mathcal{H}}
\newcommand{\iscript}{\mathcal{I}}
\newcommand{\jscript}{\mathcal{J}}
\newcommand{\lscript}{\mathcal{L}}
\newcommand{\oscript}{\mathcal{O}}
\newcommand{\sscript}{\mathcal{S}}
\newcommand{\iscriptbar}{\overline{\iscript}}
\newcommand{\iscripthat}{\widehat{\iscript}}
\newcommand{\jscripthat}{\widehat{\jscript}}
\newcommand{\jhat}{\widehat{J}}
\newcommand{\stilde}{\widetilde{s}}
\newcommand{\ab}[1]{\left|#1\right|}
\newcommand{\brac}[1]{\left\{#1\right\}}
\newcommand{\paren}[1]{\left(#1\right)}
\newcommand{\sqbrac}[1]{\left[#1\right]}
\newcommand{\elbows}[1]{{\left\langle#1\right\rangle}}
\newcommand{\ket}[1]{{\left|#1\right>}}
\newcommand{\bra}[1]{{\left<#1\right|}}
\begin{document}

\title{QUANTUM TRANSITION PROBABILITIES}

\author{Stan Gudder}
\address{Department of Mathematics, 
University of Denver, Denver, Colorado 80208}
\email{sgudder@du.edu}
\date{}
\maketitle

\begin{abstract}
Transition probabilities are an important and useful tool in quantum mechanics. However, in their present form, they are limited in scope and only apply to pure quantum states. In this article we extend their applicability to mixed states and to transitions between quantum effects. We also present their dependence on a measured operation or instrument. We begin by defining our concepts on a general quantum effect algebra. These concepts are illustrated using Holevo operations and instruments. We then present transition probabilities in the special case of the Hilbert space formulation of quantum mechanics. We show that for pure states and particular types of operations the transition probabilities reduce to their usual form. We give examples in terms of L\"uders operations and instruments.
\end{abstract}

\section{Introduction}  
In the usual Hilbert-space formulation of quantum mechanics, an important and useful role is played by transition probabilities \cite{hz12,nc00}. For example, suppose a quantum system is described by a Hilbert space $H$ and the system is in a pure state given by a unit vector $\psi\in H$. We now measure the energy of the system which is described by a self-adjoint operator $A$ on $H$. The energy values are given by eigenvalues $\lambda _i$ of $A$ with corresponding unit eigenvectors $\phi _i$ so
$A\phi _i=\lambda _i\phi _i$, $i=1,2,\ldots\,$. According to the axioms of quantum mechanics, the probability that the energy of the system is $\lambda _i$ becomes
$P(\psi ,\phi _i)=\ab{\elbows{\psi ,\phi _i}}^2$ \cite{hz12,nc00}. We call $P(\psi ,\phi _i)$ the \textit{transition probability} of $\psi$ to $\phi _i$. This definition is appropriate because $\brac{\phi _i\colon i=1,2,\ldots}$ is an orthonormal basis for $H$ so we have $\sum\limits _i\ab{\elbows{\psi ,\phi _i}}^2=1$. Also, another axiom of quantum mechanics says that if the energy value observed is $\lambda _i$ then the state $\psi$ updates to the state $\phi _i$ \cite{hz12,nc00}. Thus, $P(\psi ,\phi _i)$ is the probability that this update occurs. In general, if $\psi ,\phi\in H$ specify two states then calling $P(\psi ,\phi )=\ab{\elbows{\psi ,\phi}}^2$ the transition probability of $\psi$ to $\phi$ does not stress the fact that this transition is caused by a measurement of an observable $A$. Also, what about mixed states described by density operators on $H$? Moreover, what about more general measurements such as operations and instruments \cite{bgl95,dl70,kra83}?

In Section 2, we study probabilistic models that generalize quantum mechanics. We begin by introducing an effect algebra $E$ \cite{bgl95,dl70}. An effect $a\in E$ is a
\ityes-\itno\ (true-false) experiment that either occurs (\ityes ) or does not occur (\itno ) when measured. For $a,b\in E$ there is a partial operation $a\oplus b$ that describes a statistical sum of $a,b$ (when it is defined). A \textit{substate} on $E$ is function $s\colon E\to\sqbrac{0,1}\subseteq\real$ that satisfies $s(a\oplus b)=s(a)+s(b)$ when
$a\oplus b$ is defined. There is a special effect $1\in E$ that is always true and a substate $s$ is a \textit{state} if $s(1)=1$. We denote the set of states on $E$ by
$\sscript (E)$ and the set of substates by $\rmsub (E)$. If $s\in\sscript (E)$, then $s(a)$ is the probability that $a$ occurs when the system is in state $s$. A convex map
$\iscript\colon\rmsub (E)\to\rmsub (E)$ is called an operation \cite{bgl95,dl70}. An operation corresponds to an apparatus that can be employed to measure an effect and we denote the set of operations by $\oscript (E)$. If $s\in\sscript (E)$, $\iscript\in\oscript (E)$ with $\iscript (s)\ne 0$, $a,b\in E$ we define the
\textit{transition probability of $a$ to $b$ relative to} $s,\iscript$ by
\begin{equation*}
P_{s,\iscript}(a,b)=\frac{s(a)}{\iscript (s)(1)}\,\iscript (s)(b)
\end{equation*}
If $\iscript\in\oscript (E)$, $s_1,s_2\in\sscript (E)$ we define the \textit{transition probability of $s_1$ to $s_2$ relative to} $\iscript$ by
\begin{equation*}
P_\iscript (s_1,s_2)=\iscript (s_1)(1)\iscript\paren{\iscript (s_2)(1)}
\end{equation*}
Section 2 derives properties of $P_{s,\iscript}(a,b)$ and $P_\iscript (s_1,s_2)$. We also define repeatable operations and discuss their properties. Moreover, Section 2 discusses generalizations of effects and operations called observables and instruments, respectively. An observable is an effect-valued measure $A_x\in E$ where $A_x$ is the effect that occurs when the observable has the outcome $x$ upon being measure \cite{gud20,gud22}. An instrument is an operation-valued measure
$\iscript _x\in\oscript (E)$ where $\iscript _x$ is the operation that occurs when the instrument has the outcome $x$ upon being measured \cite{gud20,gud22,gud23}. We then have the transition probabilities $P_{\iscript _x}(s_1,s_2)$. If $A_x,B_y$ are observables, we have the transition probabilities $P_{s,\iscript _z}(A_x,B_y)$.

Section 2 ends with a discussion of Holevo operations and instruments which are used to illustrate the previously introduced concepts \cite{hol82}. If
$\alpha\in\sscript (E)$, $a\in E$, a pure Holevo operation has the form $\hscript ^{(a,\alpha )}(s)=s(a)\alpha$. We also define mixed Holevo operations and Holevo instruments in a natural way \cite{hol82}. Transition probabilities relative to these operations and instruments are computed. We also find updated states after these operations and instruments are measured.

Section 3 specializes the theory of effect algebras to the traditional Hilbert space formulation of quantum mechanics. In this formulation, effects are described by operators $A$ satisfying $0\le A\le I$ and states are described by effects $\rho$ with trace $\trace (\rho )=1$. We call $\rho$ a \textit{density operator} and the corresponding state satisfies $s(A)=\trace (\rho A)$. In this formalism an operation is a completely positive trace nonincreasing linear map from the set of density operators to the set of positive operators on the Hilbert space $H$ \cite{bgl95,dl70,hz12, nc00}. We point out that every operation has a Kraus decomposition $\iscript (\rho )=\sum K_i\rho K_i^*$ where
$K_i$ are linear operators on $H$ satisfying $\sum K_i^*K_i\le I$. In particular, we study L\"uders operations which have the form $\iscript (\rho )=A^{1/2}\rho A^{1/2}$ where $A$ is an effect \cite{lud51}. L\"uders instruments are defined in the natural way. We then compute transition probabilities relative to these operations and instruments.

\section{Basic Concepts and Definitions}  
An \textit{effect algebra} $\escript =(E,0,1,\oplus )$ \cite{bgl95,gud20,gud22,hz12} is a set of effects $E$ with two special elements $0,1\in E$ and a partial binary operation $\oplus$ satisfying the following conditions:
\begin{list}
{(EA\arabic{cond})}{\usecounter{cond}
\setlength{\rightmargin}{\leftmargin}}
\item if $a\oplus b$ is defined, then $b\oplus a$ is defined and $b\oplus a=a\oplus b$,
\item if $a\oplus (b\oplus c)$ is defined, then $(a\oplus b)\oplus c$ is defined and
\begin{equation*}
(a\oplus b)\oplus c=a\oplus (b\oplus c)
\end{equation*}
\item for every $a\in E$, there exists a unique effect $a'\in E$ such that $a\oplus a'=1$,
\item if $a\oplus 1$ is defined, then $a=0$.
\end{list}

An effect corresponds to a \ityes-\itno\ experiment that either occurs or does not occur when it is measured. The effect $0$ never occurs and the effect $1$ always occurs when measured. The \textit{complement} $a'$ of $a$ occurs if and only if $a$ does not occur. The sum $a\oplus b$ corresponds to a statistical sum of $a$ and $b$ when it is defined. A \textit{substate} on $\escript$ is a function $s\colon E\to\sqbrac{0,1}\subseteq\real$ that satisfies $s(a\oplus b)=s(a)+s(b)$ whenever $a\oplus b$ is defined. A substate $s$ that satisfies $s(1)=1$ is called a \textit{state}. We denote the set of substates by $\rmsub (\escript )$ and the set of states by $\sscript (\escript )$. A state $s$ describes the initial condition of a physical system and $s(a)$ is the probability that $a$ occurs when the system is in the state $s$. If $s\in\sscript (\escript )$,
$\lambda\in\sqbrac{0,1}$, then $\lambda s$ defined by $(\lambda s)(a)=\lambda s(a)$ is a substate. Conversely, if $s\in\rmsub (\escript )$ and $s\ne 0$, then
$\stilde (a)=s(a)/s(1)$ is a state.

A function $J\colon\rmsub (\escript )\to\rmsub (\escript )$ that satisfies $J\paren{\sum\limits _{i=1}^n\lambda _is_i}=\sum\limits _{i=1}^n\lambda _iJ(s)$ when
$\sum\limits _{i=1}^n\lambda _i\le 1$ is called a \textit{convex function} or an \textit{operation} \cite{bgl95,dl70,kra83}. We denote the set of operations on $\escript$ by
$\oscript (\escript )$ and it follows that if $J\in\oscript (\escript )$ then $J(\lambda s)=\lambda J(s)$ for all $\lambda\in\sqbrac{0,1}$, $s\in\rmsub (\escript )$. If
$J\in\oscript (\escript )$ satisfies $J(s)\in\sscript (\escript )$ for all $s\in\sscript (\escript )$, then $J$ is a \textit{channel} \cite{bgl95,hz12, nc00}. We say that
$J\in\oscript (\escript )$ \textit{measures} $a\in E$ if $s(a)=J(s)(1)$ for all $s\in\sscript (\escript )$. We say that $\sscript (\escript )$ is \textit{separating} if
$s(a)=s(b)$ for all $s\in\sscript (\escript )$ implies $a=b$. If $\sscript (\escript )$ is separating, then $J$ measures the unique effect $\jhat$ satisfying $s(\jhat\,)=J(s)(1)$ for all $s\in\sscript (\escript )$. We then consider $J$ as an apparatus that measures the effect $\jhat$. As with effects, an operation is considered to be a \ityes-\itno\ measurement. If $s\in\sscript (\escript )$, then $J(s)(1)$ is the \textit{probability $J$ occurs} when the system is in the state $s$. If $J(s)\ne 0$, we call
$J(s)^\sim =J(s)/J(s)(1)\in\sscript (\escript )$ the \textit{updated state} after $J$ is measured where $0\in\rmsub (\escript )$ means $0(a)=0$ for all $a\in E$.

If $s\in\sscript (\escript )$, $J\in\oscript (\escript )$ with $J(s)\ne 0$ and $a,b\in E$, the \textit{transition probability of $a$ then $b$ relative to} $s,J$ is 
\begin{equation*}
P_{s,J}(a,b)=\frac{s(a)}{J(s)(1)}\,J(s)(b)=s(a)J(s)^\sim (b)
\end{equation*}
Thus, $P_{s,J}(a,b)$ is the probability that $a$ occurs in the state $s$ times the probability that $b$ occurs in the updated state $J(s)^\sim$ after $J$ is measured. If
$\iscripthat =a$, then $P_{s,\iscript}(a,b)=\iscript (s)(b)$. Notice that $P_{s,J}(a,\tbullet )\in\rmsub (\escript )$ and $P_{s,J}(\tbullet ,b)\in\rmsub (\escript )$. Moreover,
$P_{s,J}(a,1)=s(a)$ for all $a\in E$, $P_{s,J}(1,b)=\iscript (s)^\sim (b)$ for all $b\in E$ and $P_{s,J}(1,1)=1$.

If $f,g\colon S\to S$ are functions, their \textit{composition} $f\circ g\colon S\to S$ is $f\circ g(t)=f\paren{g(t)}$. We say that $J\in\oscript (\escript )$ is
\textit{repeatable} if $J\circ J=J$. If $J\in\oscript (\escript )$, $s_1,s_2\in\sscript (\escript )$, the \textit{transition probability of $s_1$ to $s_2$ relative to } $J$ is
\begin{equation*}
P_J(s_1,s_2)=J(s_1)(1)J\circ J(s_2)(1)=J\sqbrac{J(s_1)(1)J(s_2)}(1)
\end{equation*}
Thus, $P_\jscript (s_1,s_2)$ is the probability $J$ occurs when the system is in state $s_1$ times the probability the composition $J\circ J$ occurs when the system is in state $s_2$. We see that if $J$ is repeatable, then
\begin{equation*}
P_J(s_1,s_2)=J(s_1)(1)J(s_2)(1)
\end{equation*}
in which case $P_J(s_1,s_2)=P_J(s_2,s_1)$ which does not happen in general. Moreover, if $J$ also measures $a$, then $P_J(s_1,s_2)=s_1(a)s_2(a)$. The following lemma shows the relationship between these two types of transition probabilities. We denote the product of two real-valued functions $f,g$ by $f\tbullet g(x)=f(x)g(x)$.

\begin{lem}    
\label{lem21}
If $\sscript (\escript )$ is separating, then $P_J(s_1,s_2)=P_{s_1,J}\tbullet P_{J(s_2),J}(\jhat ,1)$.
\end{lem}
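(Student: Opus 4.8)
The plan is to unwind both sides of the claimed identity using the two facts already recorded in the excerpt: the evaluation $P_{s,J}(a,1)=s(a)$ for all $a\in E$, and the characterizing relation $s(\jhat\,)=J(s)(1)$ for the measured effect $\jhat$, whose existence and uniqueness is guaranteed precisely because $\sscript(\escript)$ is separating. Expanding the right-hand side according to the product convention $f\tbullet g(x)=f(x)g(x)$, I would first write
\begin{equation*}
P_{s_1,J}\tbullet P_{J(s_2),J}(\jhat ,1)=P_{s_1,J}(\jhat ,1)\,P_{J(s_2),J}(\jhat ,1),
\end{equation*}
so that the task reduces to evaluating the two factors and comparing their product with the definition $P_J(s_1,s_2)=J(s_1)(1)\,J\circ J(s_2)(1)$.

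Next I would evaluate each factor via $P_{s,J}(a,1)=s(a)$. Taking $a=\jhat$ gives the first factor immediately as $P_{s_1,J}(\jhat ,1)=s_1(\jhat\,)$, and applying the same identity with the (sub)state $J(s_2)$ in the first slot gives $P_{J(s_2),J}(\jhat ,1)=J(s_2)(\jhat\,)$. The defining property of $\jhat$ then converts the first factor into $s_1(\jhat\,)=J(s_1)(1)$ directly, since $s_1$ is a state. For the second factor I want the analogous statement $J(s_2)(\jhat\,)=J\circ J(s_2)(1)$, and this is where the one genuine subtlety lies.

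I expect the crux to be that $J(s_2)$ is a substate, not a state, so the identity $s(\jhat\,)=J(s)(1)$ cannot be quoted verbatim. I would resolve this by decomposing the nonzero substate as $J(s_2)=\lambda\,t$ with $\lambda=J(s_2)(1)\in(0,1]$ and $t=J(s_2)^\sim\in\sscript(\escript)$, and then invoking homogeneity of the operation together with additivity of substates to compute
\begin{equation*}
J(s_2)(\jhat\,)=\lambda\,t(\jhat\,)=\lambda\,J(t)(1)=J(\lambda t)(1)=J\paren{J(s_2)}(1)=J\circ J(s_2)(1),
\end{equation*}
the degenerate case $J(s_2)=0$ being immediate as both sides then vanish. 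Substituting the two evaluated factors yields $J(s_1)(1)\,J\circ J(s_2)(1)$, which is exactly $P_J(s_1,s_2)$, completing the argument.
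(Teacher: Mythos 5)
Your proof is correct, and at the top level it runs parallel to the paper's: both reduce the claim to the two evaluations $P_{s_1,J}(\jhat ,1)=s_1(\jhat\,)=J(s_1)(1)$ and $P_{J(s_2),J}(\jhat ,1)=J\circ J(s_2)(1)$. The difference lies in how the second evaluation is justified. The paper invokes the separating hypothesis twice: once for $\jhat$ and once for the effect $(J\circ J)^\wedge$ measured by the composite operation, writing $P_J(s_1,s_2)=s_1(\jhat\,)\,s_2\sqbrac{(J\circ J)^\wedge}$ and then passing through the chain $s_2\sqbrac{(J\circ J)^\wedge}=(J\circ J)(s_2)(1)=J\sqbrac{J(s_2)}(1)=P_{J(s_2),J}(\jhat ,1)$. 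Since $P_{J(s_2),J}(\jhat ,1)=J(s_2)(\jhat\,)$, the last link in that chain is exactly the identity $J(s_2)(\jhat\,)=J\sqbrac{J(s_2)}(1)$, i.e.\ the measuring property of $\jhat$ applied to the substate $J(s_2)$ --- precisely the subtlety you isolate, and which the paper uses without comment. Your proof dispenses with $(J\circ J)^\wedge$ and instead establishes this extension explicitly, decomposing $J(s_2)=\lambda t$ with $\lambda=J(s_2)(1)$, $t=J(s_2)^\sim\in\sscript (\escript )$ and using the homogeneity $J(\lambda t)=\lambda J(t)$ recorded in Section 2. So your route is slightly longer but fills in the one step the paper leaves implicit, and it needs the separating hypothesis only to define $\jhat$ itself. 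A caveat shared by both arguments: if $J\circ J(s_2)=0$, the factor $P_{J(s_2),J}(\jhat ,1)$ is undefined under the paper's definition of transition probability, so the lemma tacitly assumes this does not occur; your remark that both sides vanish when $J(s_2)=0$ covers your auxiliary identity but not the statement of the lemma itself.
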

\begin{proof}
Since
\begin{equation*}
s_2\sqbrac{(J\circ J)^\wedge}=(J\circ J)(s_2)(1)=J\sqbrac{J(s_2)}(1)=P_{J(s_2),J}(\jhat ,1)
\end{equation*}
we obtain
\begin{align*}
P_J(s_1,s_2)&=s_1(\jhat\,)s_2\sqbrac{(J\circ J)^\wedge}=P_{s_1,J}(\jhat ,1)P_{J(s_2),J}(\jhat ,1)\\
   &=P_{s_1,J}\tbullet P_{J(s_2),J}(\jhat ,1)\qedhere
\end{align*}
\end{proof}

An \textit{observable} on $\escript$ is a finite set of effects $A=\brac{A_x\colon x\in\Omega _A}$, $A_x\in E$, such that $\sum\limits _{x\in\Omega _A}A_x=1$. We call
$\Omega _A$ the \textit{outcome space} of $A$ and $A_x$ is the effect that occurs when $A$ has outcome $x\in\Omega _A$. For $\Delta\subseteq\Omega _A$ we define the \textit{effect-valued measure} $A(\Delta )=\sum\limits _{x\in\Delta}A_x$ \cite{bgl95,hz12}. We see that $A(\Omega _A)=1$ and
$A(\Delta _1\cup\Delta _2)=A(\Delta _1)+A(\Delta _2)$ when $\Delta _1\cap\Delta _2=\emptyset$. We denote the set of observables on $\escript$ by
$\rmob (\escript )$. If $s\in\sscript (\escript )$, $A\in\rmob (\escript )$, the \textit{distribution} of $A$ in the state $s$ is the probability measure
$\Phi _s^A(\Delta )=\sum\limits _{x\in\Delta}s(A_x)$ on $\Omega _A$. Let $A,B\in\rmob (\escript )$, $s\in\sscript (\escript )$, $J\in\oscript (\escript )$ with $J(s)\ne 0$ and consider the transition probability
\begin{equation*}
P_{xy}=P_{s,J}(A_x,B_y)=\frac{s(A_x)}{J(s)(1)}\,J(s)(B_y)
\end{equation*}
For $\Delta\subseteq\Omega _A\times\Omega _B$ we define $P(\Delta )=\sum\brac{P_{xy}\colon (x,y)\in\Delta}$.

\begin{lem}    
\label{lem22}
$P$ is a probability measure on $\Omega _A\times\Delta _B$ with marginals
\begin{align*}
P^1(\Delta _1)&=P(\Delta _1\times\Omega _B)=s\sqbrac{A(\Delta _1)}=\Phi _s^A(\Delta _1)\hbox{ where }\Delta _1\subseteq\Omega _A\\
P^2(\Delta _2)&=P(\Omega _A\times\Delta _2)=J(s)^\sim\sqbrac{B(\Delta _2)}=\Phi _{J(s)^\sim}^B(\Delta _2)\hbox{ where }\Delta _2\subseteq\Omega _B
\end{align*}
\end{lem}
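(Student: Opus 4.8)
The plan is to verify directly that $P$ is nonnegative, normalized, and finitely additive, and then to read off each marginal by summing out the complementary index. The structural fact that makes everything routine is that each $P_{xy}$ factors as a product of an $x$-dependent term and a $y$-dependent term, namely $P_{xy}=\sqbrac{s(A_x)/J(s)(1)}\,J(s)(B_y)$.

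First I would check that the quotient is well defined and that $P_{xy}\ge 0$. Since $J(s)\ne 0$ and $J(s)\in\rmsub (\escript )$, the relation $b\oplus b'=1$ gives $J(s)(b)+J(s)(b')=J(s)(1)$, so $J(s)(b)\le J(s)(1)$ for every effect $b$; in particular $J(s)(1)>0$, so the denominator is legitimate. As $s(A_x)\ge 0$ and $J(s)(B_y)\ge 0$, each $P_{xy}\ge 0$, and finite additivity over disjoint subsets of the finite set $\Omega _A\times\Omega _B$ is immediate from the definition $P(\Delta )=\sum\brac{P_{xy}\colon (x,y)\in\Delta}$.

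Next I would verify normalization. Using the factorization, $P(\Omega _A\times\Omega _B)=\frac{1}{J(s)(1)}\paren{\sum _x s(A_x)}\paren{\sum _y J(s)(B_y)}$. Additivity of the substate $s$ together with $\sum _x A_x=1$ gives $\sum _x s(A_x)=s(1)=1$, and likewise $\sum _y J(s)(B_y)=J(s)(1)$; the two factors $J(s)(1)$ cancel and leave $P(\Omega _A\times\Omega _B)=1$, so $P$ is a probability measure.

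Finally, the marginals follow the same pattern. Summing $P_{xy}$ over $y\in\Omega _B$ collapses the $y$-factor to $J(s)(1)$, cancels the denominator, and yields $P^1(\Delta _1)=\sum _{x\in\Delta _1}s(A_x)=s\sqbrac{A(\Delta _1)}=\Phi _s^A(\Delta _1)$ by additivity of $s$ applied to the effect-valued measure $A(\Delta _1)$. Summing over $x\in\Omega _A$ collapses the $x$-factor to $1$ and yields $P^2(\Delta _2)=\sum _{y\in\Delta _2}J(s)(B_y)/J(s)(1)=\sum _{y\in\Delta _2}J(s)^\sim (B_y)=J(s)^\sim\sqbrac{B(\Delta _2)}=\Phi _{J(s)^\sim}^B(\Delta _2)$, using additivity of the updated state $J(s)^\sim$. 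There is no genuine obstacle here: the only point needing care is justifying $J(s)(1)>0$ before dividing, and recognizing that the two marginal identities are nothing more than the additivity of $s$ and of $J(s)^\sim$ evaluated on $A(\Delta _1)$ and $B(\Delta _2)$.
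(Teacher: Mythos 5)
Your proposal is correct and follows essentially the same route as the paper: verify normalization via the factorization $\sum_x s(A_x)=s(1)=1$ and $\sum_y J(s)(B_y)=J(s)(1)$, note additivity is immediate from the definition, and compute each marginal by summing out the complementary index and invoking additivity of $s$ and of $J(s)^\sim$. Your extra observations (that $J(s)\ne 0$ forces $J(s)(1)>0$, and nonnegativity of each $P_{xy}$) are points the paper leaves implicit, but they do not change the argument.
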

\begin{proof}
We have that
\begin{align*}
P(\Omega _A\times\Omega _B)&=\sum\brac{P_{xy}\colon (x,y)\in\Omega _A\times\Omega _B}\\
   &=\frac{1}{J(s)(1)}\sum\brac{s(A_x)J(s)(B_y)\colon (x,y)\in\Omega _A\times\Omega _B}\\
   &=\frac{1}{J(s)(1)}\sum _{x\in\Omega _A}s(A_x)\sum _{y\in\Omega _B}J(s)(B_y)=\frac{1}{J(s)(1)}\,J(s)(1)=1
\end{align*}
If $\Delta _1,\Delta _2\subseteq\Omega _A\times\Omega _B$ with $\Delta _1\cap\Delta _2=\emptyset$, then
\begin{align*}
P(\Delta _1\cup\Delta _2)&=\sum\brac{P_{xy}\colon (x,y)\in\Delta _1\cup\Delta _2}\\
   &=\sum\brac{P_{xy}\colon (x,y)\in\Delta _1}+\sum\brac{P_{xy}\colon (x,y)\in\Delta _2}\\
   &=P(\Delta _1)+P(\Delta _2)
\end{align*}
Hence, $P$ is a probability measure on $\Omega _A\times\Omega _B$. The marginals of $P$ become
\begin{equation*}
P^1(\Delta _1)=P(\Delta _1\times\Omega _B)=\sum _{(x,y)\in\Delta _1\times\Omega _B}\frac{s(A_x)}{J(s)(1)}\,J(s)(B_y)
   =s\sqbrac{A(\Delta _1)}=\Phi _s^A(\Delta _1)
\end{equation*}
where $\Delta _1\subseteq\Omega _A$ and 
\begin{align*}
P^2(\Delta _2)&=P(\Omega _A\times\Delta _2)=\sum _{(x,y)\in\Omega _A\times\Delta _2}\frac{s(A_x)}{J(s)(1)}\,J(s)(B_y)\\
   &=J(s)^\sim\sqbrac{B(\Delta _2)}=\Phi _{J(s)^\sim}^B(\Delta _2)
\end{align*}
where $\Delta _2\subseteq\Omega _B$.
\end{proof}

An \textit{instrument} $\iscript$ on $\escript$ is a finite set of operations $\iscript =\brac{\iscript _x\colon x\in\Omega _\iscript}$, $\iscript _x\in\oscript (\escript )$ such that
$\sum\limits _{x\in\Omega _\iscript}\iscript _x$ is a channel \cite{bgl95,dl70,gud23,hz12}. We call $\Omega _\iscript$ the \textit{outcome space} of $\iscript$. If
$s\in\sscript (\escript )$, then $\iscript _x(s)(1)$ is the \textit{probability} that the outcome $x$ occurs when $\iscript$ is measured and the system is in state $s$. Moreover, we call $\Phi _s^\iscript (\Delta )=\sum\limits _{x\in\Delta}\iscript _x(s)(1)$ the \textit{distribution} of $\iscript$ in the state $s$. Since
$\iscriptbar =\sum _{x\in\Omega _\iscript}\iscript _x$ is a channel we have
\begin{equation*}
\Phi _s^\iscript (\Omega _\iscript )=\sum _{x\in\Omega _\iscript}\iscript _x(s)(1)=\sqbrac{\sum _{x\in\Omega _\iscript}\iscript _x(s)}(1)=1
\end{equation*}
As in Lemma~\ref{lem22}, if $\Delta _1,\Delta _2\subseteq\Omega _\iscript$ and $\Delta _1\cap\Delta _2=\emptyset$, then 
\begin{equation*}
\Phi _s^\iscript (\Delta _1\cup\Delta _2)=\Phi _s^\iscript (\Delta _1)+\Phi _s^\iscript (\Delta _2)
\end{equation*}
so $\Phi _s^\iscript$ is a probability measure on $\Omega _\iscript$. We denote the set of instruments on $\escript$ by $\rmin (\escript )$. If $\iscript\in\rmin (\escript )$, we say that $\iscript$ \textit{measures} an observable $A\in\rmob (\escript )$ with $\Omega _A=\Omega _\iscript$ if $\iscript _x(s)(1)=s(A_x)$ for all $s\in\sscript (\escript )$
$x\in\Omega _\iscript$. If $\escript$ has a separating set of states, $A$ is unique and we write $A=\iscripthat$.

If $\iscript\in\rmin (\escript )$ and $A,B\in\rmob (\escript )$ we obtain the transition probability
\begin{equation*}
P_{s,\iscript _z}(A_x,B_y)=\frac{s(A_x)}{\iscript _z(s)(1)}\,\iscript _z(s)(B_y)
\end{equation*}
Multiplying by $\iscript _z(s)(1)$, we have the function on $\Omega _\iscript\times\Omega _A\times\Omega _B$ given by
\begin{equation*}
Q_{s,\iscript _z}(A_x,B_y)=\iscript (s)(1)P_{s,\iscript _z}(A_x,B_y)=s(A_x)\iscript _x(s)(B_y)
\end{equation*}
Since
\begin{equation*}
\sum\brac{Q_{s,\iscript _z}(A_xB_y)\colon(z,x,y)\in\Omega _\iscript\times\Omega _A\times\Omega _B}=s(1)\iscriptbar (s)(1)=1
\end{equation*}
we conclude that $Q$ determines a probability measure on $\Omega _\iscript\times\Omega _A\times\Omega _B$. The marginals of $Q$ become
\begin{align*}
Q^2(x)&=s(A_x)\sum\brac{\iscript _z(s)(B_y)\colon (z,y)\in\Omega _\iscript\times\Omega _B}\\
   &=s(A_x)\iscriptbar (s)(1)=s(A_x)
\end{align*}
so $Q^2(\Delta )=\Phi _s^A(\Delta )$, $\Delta\subseteq\Omega _A$,
\begin{align*}
Q^3(y)&=\sum\brac{s(A_x)\iscript _z(s)(B_y)\colon (z,x)\in\Omega _\iscript\times\Omega _A}\\
   &=\iscriptbar (s)(B_y)
\end{align*}
so $Q^3(\Delta )=\Phi _{\iscriptbar (s)}^B(\Delta )$, $\Delta\subseteq\Omega _B$,
\begin{align*}
Q^1(z)&=\sum\brac{s(A_x)\iscript _z(s)(B_y)\colon (x,y)\in\Omega _A\times\Omega _B}\\
   &=\iscript _z(s)(1)
\end{align*}
so $Q^1(\Delta )=\Phi _s^\iscript (\Delta )$, $\Delta\subseteq\Omega _\iscript$. We conclude that $Q^1$, $Q^2$, $Q^3$ give the distributions of $\iscript$ in the state $s$, $A$ in the state $s$ and $B$ in the state $\iscriptbar (s)$.

If $\iscript\in\rmin (\escript )$, we define the \textit{transition probability} from $s_1$ to $s_2$ relative to $\iscript$ and $(x,y)\in\Omega _\iscript\times\Omega _\iscript$ by 
\begin{equation*}
(P_\iscript )_{(x,y)}(s_1,s_2)=\iscript _x(s_1)(1)\iscript _y\paren{\iscriptbar (s_2)}(1)
\end{equation*}
We see that $P_\iscript$ is a probability measure on $\Omega _\iscript\times\Omega _\iscript$ because 
\begin{equation*}
\sum _{x,y}(P_\iscript )_{(x,y)}(s_1,s_2)=\iscriptbar (s_1)(1)\iscriptbar\paren{\iscriptbar (s_2)}(1)=1
\end{equation*}
The marginals of $P_\iscript$ become
\begin{equation*}
\sum _{y\in\Omega _\iscript}(P_\iscript )_{(x,y)}(s_1,s_2)=\iscript _x(s_1)(1)
\end{equation*}
which is the probability of outcome $x$ when $\iscript$ is measured in the state $s_1$
\begin{equation*}
\sum _{x\in\Omega _\iscript}(P_\iscript )_{(x,y)}(s_1,s_2)=\iscript _y\iscriptbar (s_2)(1)
\end{equation*}
which is the probability of outcome $y$ when $\iscript$ is measured in the state $\iscriptbar (s_2)$.

We now illustrate the previous theory with a particular type of operation and instrument. If $\alpha\in\sscript (\escript )$, $a\in E$, we call
$\hscript ^{(a,\alpha )}\in\oscript (E)$ defined by $\hscript ^{(a,\alpha )}(s)=s(a)\alpha$ a \textit{pure Holevo operation} \cite{hol82}. Since
$\sqbrac{\hscript ^{(a,\alpha )}(s)}(1)=s(a)$ for all $s\in\sscript (\escript )$ we see that $\hscript ^{(a,\alpha )}$ measures $a$. If $A\in\rmob (\escript )$ and
$\alpha _x\in\sscript (\escript )$ for $x\in\Omega _A$ we call
\begin{equation*}
\hscript ^{(A,\brac{\alpha})}(s)=\sum _{x\in\Omega _A}s(A_x)\alpha _x
\end{equation*}
a \textit{mixed Holevo operation} \cite{hol82}. For $s\in\sscript (\escript )$ we have
\begin{equation*}
\sum _{x\in\Omega _A}s(A_x)=s\paren{\sum _{x\in\Omega _A}A_x}=s(1)=1
\end{equation*}
so $\hscript ^{\paren{A,\brac{\alpha}}}$ is a channel. We also call
\begin{equation*}
\hscript _x^{\paren{A,\brac{\alpha}}}(s)=s(A_x)\alpha _x
\end{equation*}
a \textit{Holevo instrument} \cite{hol82}. Notice that $\hscript _x^{\paren{A,\brac{\alpha}}}$ is an instrument because
\begin{equation*}
\sum _{x\in\Omega _A}\sqbrac{\hscript _x^{\paren{A,\brac{\alpha}}}(s)}(1)=\sum _{x\in\Omega _A}s(A_x)=1
\end{equation*}
Also, $\hscript _x^{\paren{A,\brac{\alpha}}}$ measures $A$ because $\sqbrac{\hscript _x^{\paren{A,\brac{\alpha}}}(s)}(1)=s(A_x)$ for all $s\in\sscript (\escript )$ and
$x\in\Omega _A$.

For a pure Holevo operation $\hscript ^{(a,\alpha )}$ and $b,c\in E$, if $s(a)\ne 0$ we have
\begin{equation*}
P_{s,\hscript ^{(a,\alpha )}}(b,c)=\frac{s( b)}{\sqbrac{\hscript ^{(a,\alpha )}(s)}(1)}\sqbrac{\hscript ^{(a,\alpha )}(s)}(c)=\frac{s(b)}{s(a)}\,s(a)\alpha (c)=s(b)\alpha (c)
\end{equation*}
We see that $P_{s,\hscript ^{(a,\alpha )}}$ is independent of $a$ if $s(a)\ne 0$. For a mixed Holevo operation we have
\begin{equation*}
\sqbrac{\hscript ^{\paren{A,\brac{\alpha}}}(s)}(c)=\sum _{x\in\Omega _A}s(A_x)\alpha _x(c)
\end{equation*}
so that
\begin{equation*}
P_{s,\hscript ^{\paren{A,\brac{\alpha}}}}(b,c)=s(b)\sum _{x\in\Omega _A}s(A_x)\alpha _x(c)
\end{equation*}
The state transition probability becomes 
\begin{align*}
P_{\hscript ^{(a,\alpha )}}(s_1,s_2)&=\hscript ^{(a,\alpha )}(s_1)(1)\hscript ^{(a,\alpha )}\sqbrac{\hscript ^{(a,\alpha )}(s_2)}(1)\\
   &=s_1(a)\hscript ^{(a,\alpha )}\sqbrac{s_2(a)\alpha}(1)=s_1(a)s_2(a)\sqbrac{\hscript ^{(a,\alpha )}(\alpha )}(1)\\
   &=s_1(a)s_2(a)\alpha (a)
\end{align*}
Since a mixed Holevo operation is a channel, we have $P_{\hscript ^{\paren{A,\brac{\alpha}}}}(s_1,s_2)=1$ for all $s_1,s_2\in\sscript (\escript )$. For Holevo instruments we have
\begin{align*}
P_{s,\hscript _x^{\paren{A,\brac{\alpha}}}}(b,c)&=\frac{s(b)}{\sqbrac{\hscript _x^{\paren{A,\brac{\alpha}}}(s)}(1)}\sqbrac{\hscript _x^{\paren{A,\brac{\alpha}}}(s)}(c)
   =\frac{s(b)}{s(A_x)}\,s(A_x)\alpha _x(c)\\
   &=s(b)\alpha _x(c)
\intertext{and}
P_{\hscript _x^{\paren{A,\brac{x}}}}(s_1,s_2)&=\hscript _x^{\paren{A,\brac{\alpha}}}(s_1)(1)\hscript _x^{\paren{A,\brac{x}}}
   \sqbrac{\hscript _x^{\paren{A,\brac{\alpha}}}(s_2)}(1)\\
   &=s_1(A_x)\hscript _x^{\paren{A,\brac{\alpha}}}\sqbrac{s_2(A_x)\alpha _x}(1)=s_1(A_x)s_2(A_x)\alpha _x(A_x)
\end{align*}

An operation $\iscript\in\oscript (\escript )$ is \textit{repeatable} if $\iscript\paren{\iscript (s)}=\iscript (s)$ for all $s\in\sscript (\escript )$. Thus, $\iscript\circ\iscript =\iscript$ so we can repeat $\iscript$ without changing the state. If $\iscript$ is repeatable, the transition probability satisfies
\begin{equation*}
P_\iscript (s_1,s_2)=\iscript (s_1)(1)\iscript\paren{\iscript (s_2)}(1)=\iscript (s_1)(1)\iscript (s_2)(1)=P_\iscript (s_2,s_1)
\end{equation*}
We now show that the converse does not hold. We have seen that 
\begin{equation*}
P_{\hscript ^{(a,\alpha )}}(s_1,s_2)=P_{\hscript ^{(a,\alpha )}}(s_2,s_1)
\end{equation*}
However, if $s(a)\ne 0,1$ we obtain
\begin{align*}
\hscript ^{(a,\alpha )}\paren{\hscript ^{(a,\alpha )}(s)}&=\hscript {(a,\alpha )}\paren{s(a)\alpha}=s(a)\hscript ^{(a,\alpha )}(\alpha )\\
   &=s(a)\alpha (a)\alpha\ne s(a)\alpha =\hscript ^{(a,\alpha )}(s)
\end{align*}
Hence, if $s(a)\ne 0$, then $\hscript ^{(a,\alpha )}$ is repeatable if and only if $s(a)=1$.

\section{Hilbert Space Transition Probabilities}  
In this section we consider Hilbert space effects, operations, instruments, states and their transition probabilities. These are the basic concepts for the standard quantum probability theory. Let $H$ be a complex Hilbert space which for simplicity, we take to be finite dimensional. Letting $\lscript _+(H)$ be the set of positive linear operators on $H$, an \textit{effect} is represented by an operator $a\in\lscript _+(H)$ satisfying $0\le a\le I$ where $0,I$ are the zero and identity operators respectively 
\cite{bgl95,dl70,hz12}. The set of effects is denoted by $\escript (H)$. A \textit{density operator} $\rho\in\lscript _+(H)$ satisfies $\trace (\rho )=1$ and we denote the set of density operators on $H$ by $\dscript (H)$. According to Born's rule, every state $s\in\sscript (H)$ has the form $s(a)=\trace (\rho a)$ for a unique
$\rho\in\dscript (H)$. This can actually be proved using Gleason's theorem \cite{hz12} so we will identify $\sscript (H)$ and $\dscript (H)$. In the Hilbert space formulation of quantum mechanics, an \textit{operation} is a completely positive linear map $J\colon\dscript (H)\to\lscript _+(H)$ satisfying $\trace\sqbrac{J(\rho )}\le\trace (\rho )$ for all
$\rho\in\dscript (H)$ \cite{bgl95,dl70,hz12,nc00}. We denote the set of operations $H$ by $\oscript (H)$. If $\trace\sqbrac{J(\rho )}=1$ for all $\rho\in\dscript (H)$, then $J$ is a \textit{channel}.

Denoting the set of linear operators on $H$ by $\lscript (H)$, it can be shown that every $J\in\oscript (H)$ has a \textit{Kraus decomposition}
$J(\rho )=\sum\limits _{i=1}^nK_i\rho K_i^*$ where $K_i\in\lscript (H)$ with $\sum\limits _{i=1}^nK_i^*K_i\le I$ \cite{hz12, kra83,nc00}. Notice that $J$ is a channel if and only if  $\sum\limits _{i=1}^nK_i^*K_i=I$. The operators $K_i$ are called Kraus operators for $J$ and they are not unique. If $J$ has the form $J(\rho )=a^{1/2}\rho a^{1/2}$ where
$a\in\escript (H)$ we call $J$ a \textit{L\"uders operation}. If the Kraus operators are projections, it follows that $K_iK_j=0$ for $i\ne j$ and hence, in this case, $J$ is repeatable. Conversely, if $J(\rho )=a^{1/2}\rho a^{1/2}$ is a repeatable L\"uders operation then
\begin{equation*}
a^{1/2}\rho a^{1/2}=J(\rho )=J\paren{J(\rho )}=a\rho a
\end{equation*}
for every $\rho\in\dscript (H)$. Letting $\rho =\tfrac{1}{n}I$ where $n=\dim H$, we conclude that $a=a^2$ so $a$ is a projection.

In the Hilbert space case, it is well-known that $\sscript (H)$ is separating. It follows that an operation $J\in\oscript (H)$ measures a unique effect $\jscripthat\in\escript (H)$ satisfying $\trace (\rho\jhat\,)=\trace\sqbrac{J(\rho )}$ for all $\rho\in\dscript (H)$. Notice that if $J$ is a channel, then $\jhat =I$. Although an operation measures a unique effect, as we shall see, every effect is measured by many operations. If $J$ has Kraus decomposition
$J(\rho )=\sum K_i\rho K_i^*$, since
\begin{align*}
\trace\sqbrac{J(\rho )}&=\trace\paren{\sum K_i\rho K_i^*}=\sum\trace (K_i\rho K_i^*)=\sum\trace (\rho K_i^*K_i)\\
   &=\trace\paren{\rho\sum K_i^*K_i}
\end{align*}
for all $\rho\in\dscript (H)$, we conclude that $\jhat =\sum K_i^*K_i$. In particular, if $J(\rho )=a^{1/2}\rho a^{1/2}$ is a L\"uders operation, then $\jhat =a$.

An \textit{observable} is given by a finite sequence of effects $A=\brac{A_x\colon x\in\Omega _A}$ where $A_x\in\escript (H)$ with $\sum A_x=I$
\cite{bgl95,dl70,gud20,gud22,hz12}. Then $\Omega _A$ is the outcome space of $A$ and $A$ determines an effect-valued measure
$A(\Delta )=\sum\limits _{x\in\Delta}A_x$, $\Delta\subseteq\Omega _A$. The probability that $A$ has outcome $x$ in the state $\rho\in\dscript (H)$ is $\trace (\rho A_x)$ and the $\rho$-distribution of $A$ is $\Phi _\rho ^A(\Delta )=\sum\limits _{x\in\Delta}\trace (\rho A_x)$. The set of observables on $H$ is denoted by $\rmob (H)$. In a similar way, an \textit{instrument} is given by a finite sequence of operations $\iscript =\brac{J _x\colon x\in\Omega _\iscript}$, $J_x\in\oscript (H)$ where $\sum J_x$ is a channel
\cite{bgl95,dl70,gud20,gud23,hz12}. Then $\Omega _\iscript$ is the outcome space of $\iscript$ and $\iscript$ defines an operation-valued measure
$\iscript (\Delta )=\sum\limits _{x\in\Delta}J_x$, $\Delta\in\Omega _\iscript$. The probability that $\iscript$ has outcome $x$ in the state $\rho\in\dscript (H)$ is
$\trace\sqbrac{J_x(\rho )}$ and the $\rho$-distribution of $\iscript$ is $\phi _\rho ^\iscript (\Delta )=\sum\limits _{x\in\Delta}\trace\sqbrac{J_x(\rho )}$. We denote the set of instruments on $H$ by $\rmin (H)$. An instrument $\iscript$ measures the unique observable $\jhat _x$. If $A\in\rmob (H)$, the corresponding L\"uders instrument is
$\lscript _x^A(\rho )=A_x^{1/2}\rho A_x^{1/2}$, $x\in\Omega _A$. Since
\begin{equation*}
\trace\sqbrac{\lscript _x^A(\rho )}=\trace (A_x^{1/2}\rho A_x^{1/2})=\trace (\rho A_x)
\end{equation*}
for all $\rho\in\dscript (H)$, we have $(\lscript _x^A)^\wedge =A_x$ so $\lscript ^A$ measures $A$.

We now consider transition probabilities in the Hilbert space formulation of quantum mechanics. Let $\iscript\in\oscript (H)$, $\rho\in\dscript (H)$, $A,B\in\escript (H)$. If
$\iscript$ has Kraus decomposition $\iscript (\rho )=\sum K_i\rho K_i^*$ with $\sum K_i^*K_i\le I$, then 
\begin{align*}
\trace\sqbrac{\iscript (\rho )}&=\trace\paren{\sum K_i\rho K_i^*}=\sum\trace (K_i\rho K_i^*)=\sum\trace (\rho K_i^*K_i)\\
   &=\trace\paren{\rho\sum K_i^*K_i}
\end{align*}
Assuming that $\trace\sqbrac{\iscript (\rho )}\ne 0$, we have the updated state
\begin{equation*}
\iscript (\rho )^\sim =\frac{\iscript (\rho )}{\trace\paren{\rho\sum K_i^*K_i}}=\frac{\sum K_i\rho K_i^*}{\trace\paren{\rho\sum K_i^*K}}
\end{equation*}
If $\iscript =\lscript ^C$, $C\in\escript (H)$ is a L\"uders operation $\lscript ^C(\rho )=C^{1/2}\rho C^{1/2}$, then
\begin{equation*}
\lscript ^C(\rho )^\sim =\frac{\lscript ^C(\rho )}{\trace\sqbrac{\lscript ^C(\rho )}}=\frac{C^{1/2}\rho C^{1/2}}{\trace (\rho C)}
\end{equation*}

For $A,B\in\escript (H)$ and general $\iscript\in\oscript (H)$, the transition probability of $A$ to $B$ relative to $\rho$, $\iscript$ becomes
\begin{align*}
P_{\rho ,\iscript}(A,B)=\trace (\rho A)\trace\sqbrac{\iscript (\rho )^\sim B}&=\frac{\trace (\rho A)}{\trace\paren{\rho\sum K_i^*K_i}}\,\trace\paren{\sum K_i\rho K_i^*B}\\
   &=\frac{\trace (\rho A)}{\trace\paren{\rho\sum K_i^*K_i}}\,\trace\paren{\rho\sum K_i^*BK_i}
\end{align*}
If $\iscript =\lscript ^C$ is a L\"iders operation, we obtain
\begin{equation*}
P_{\rho ,\lscript ^C}(A,B)=\frac{\trace (\rho A)}{\trace (C^{1/2}\rho C^{1/2})}\,\trace (\rho C^{1/2}BC^{1/2})
   =\frac{\trace (\rho A)}{\trace (\rho C)}\,\trace (C^{1/2}\rho C^{1/2}B)
\end{equation*}
When $\psi$ is a unit vector in $H$, $\rho =\ket{\psi}\bra{\psi}$ is a pure state and we have
\begin{align*}
P_{\rho ,\lscript ^C}(A,B)&=\frac{\trace\paren{\ket{\psi}\bra{\psi}A}}{\trace\paren{\ket{\psi}\bra{\psi}C}}\,\trace\paren{\ket{\psi}\bra{\psi}C^{1/2}BC^{1/2}}\\
   &=\frac{\elbows{A\psi ,\psi}}{\elbows{C\psi ,\psi}}\,\elbows{C^{1/2}BC^{1/2}\psi ,\psi}
\end{align*}

Now let $\rho _1,\rho _2\in\dscript (H)$ and $\iscript\in\oscript (H)$. We then have the transition probability
\begin{equation*}
P_\iscript (\rho _1,\rho _2)=\iscript (\rho _1)(I)\iscript\paren{\iscript (\rho _2)}(I)=\trace\sqbrac{\iscript (\rho _1)}\trace\sqbrac{\iscript\paren{\iscript (\rho _2)}}
\end{equation*}
If $\iscript$ has the above Kraus decomposition, then
\begin{align*}
P_\iscript (\rho _1,\rho _2)&=\trace\paren{\sum _iK_i\rho _1K_i^*}\trace\paren{\sum _{i,j}K_jK_i\rho _2K_i^*K_j^*}\\
   &=\trace\paren{\rho _1\sum _iK_i^*K_i}\trace\paren{\rho _2\sum _{i,j}K_i^*K_j^*K_jK_i}
\end{align*}
In case of a L\"uders operation $\iscript =\lscript ^A$ we obtain
\begin{equation*}
P_{\lscript ^A}(\rho _1,\rho _2)=\trace (\rho _1A)\trace (\rho_2 A^2)
\end{equation*}
When $\rho _1=\ket{\psi _i}\bra{\psi _1}$, $\rho _2=\ket{\psi _2}\bra{\psi _2}$ are pure states, this becomes 
\begin{equation*}
P_{\lscript ^A}(\rho _1,\rho _2)=\trace\paren{\ket{\psi _1}\bra{\psi _1}A}\trace\paren{\ket{\psi _2}\bra{\psi _2}A^2}
   =\elbows{A\psi _1,\psi _1}\elbows{A^2\psi _2,\psi _2}
\end{equation*}
If $A$ is a projection, we have
\begin{equation*}
P_{\lscript ^A}(\rho _1,\rho _2)=\elbows{A\psi _1,\psi _1}\elbows{A\psi _2,\psi _2}
\end{equation*}
and when $A=\ket{\psi}\bra{\psi}$ is a one-dimensional projection, this becomes
\begin{equation*}
P_{\lscript ^A}(\rho _1,\rho _2)=\ab{\elbows{\psi ,\psi _1}}^2\ab{\elbows{\psi ,\psi _2}}^2
\end{equation*}
Finally, if $\psi =\psi _1$ we obtain $P_{\lscript ^A}(\rho _1,\rho _2)=\ab{\elbows{\psi _1,\psi _2}}^2$ which is the usual transition probability for pure states. We conclude that the usual transition probability is a very special case.


\begin{thebibliography}{99}
\bibitem{bgl95}P.\,Busch, M.\,Grabowski and P.\,Lahti, \textit{Operational Quantum Physics}, Springer-Verlag, Berlin, 1995.
\bibitem{dl70}E.\,Davies and J.\,Lewis, An operational approach to quantum probability, \textit{Comm. Math. Phys.} \textbf{17}, 239--260 (1970).
\bibitem{gud20}S.\,Gudder, Quantum instruments and conditioned observables, arXiv:quant-ph 2005.08117 (2020).
\bibitem{gud22}---, Combinations of quantum observables and instruments, \textit{J. Math. Phys. A} \textbf{54 (36)}: 364002 (2021).
\bibitem{gud23}---, A theory of quantum instruments, \textit{Quanta} \textbf{12}, 27--40 (2023).
\bibitem{hz12}T.\,Heinosaari and M.\,Ziman, \textit{The Mathematical Language of Quantum Theory}, Cambridge University Press, Cambridge, 2012.
\bibitem{hol82}A.\,Holevo, Probabilistic and Statistical Aspects of Quantum Theory, \textit{North-Holland}, Amsterdam, 1982.
\bibitem{kra83}K.\,Kraus, \textit{States, Effects and Operations}, \textit{Lecture Notes in Physics}, \textbf{190}, Springer, Berlin, 1983.
\bibitem{lud51}G.~L\"uders, \"Uber due Zustands\"anderung durch den Messprozess, \textit{Annalen der Physik}
\textbf{443}, 322--328 (1951).
\bibitem{nc00}M.\,Nielson and I.\,Chuang, Quantum Computation and Quantum Information, Cambridge University Press, Cambridge, 2000.

\end{thebibliography}
\end{document}